\newtheorem{theorem}{Theorem}
\newtheorem{remark}{Remark}
\newtheorem{definition}[theorem]{Definition}
\newtheorem{lemma}{Lemma}
\newtheorem{corollary}{Corollary}
\begin{document}

\title{Continuous approximations of a class of piece-wise continuous systems}

\author{MARIUS-F. DANCA\\Department of Mathematics and Computer Science, Avram Iancu University, \\Str. Ilie Macelaru, nr. 1A, 400380 Cluj-Napoca, Romania,\\and\\Romanian Institute of Science and Technology, \\Str. Ciresilor nr. 29, 400487 Cluj-Napoca, Romania}

\maketitle

\begin{abstract}
In this paper we provide a rigorous mathematical foundation for continuous approximations of a class of systems with piece-wise continuous functions. By using techniques from the theory of differential inclusions, the underlying piece-wise functions can be locally or globally approximated. The approximation results can be used to model piece-wise continuous-time dynamical systems of integer or fractional-order. In this way, by overcoming the lack of numerical methods for differential equations of fractional-order with discontinuous right-hand side, unattainable procedures for systems modeled by this kind of equations, such as chaos control, synchronization, anticontrol and many others, can be easily implemented. Several examples are presented and three comparative applications are studied.
\end{abstract}

\emph{Keywords: }piece-wise continuous function, fractional-order system, differential inclusion, approximate selection, sigmoid function

\section{Introduction}

Despite the doubths in the 17th century regarding the practical applicability of fractional derivatives since fractional derivatives have no clear geometrical interpretations \cite{pod}, there are nowadays a lot of works on systems of fractional-order and their related applications in many domains, such as physics, engineering, mathematics, finance, chemistry, and so on (see, for example, the books \cite{old,petr} or the papers of Caputo \cite{cap}).

On the other hand, discontinuous functions can be found in two-dimensional mechanical systems such as systems with dry friction, oscillating systems with combined dry and viscous damping, forced vibrations, brake processes with locking phases, control synthesis of uncertain systems, control theory, calculus of variations, systems with stick and slip modes, braking processes with locking phases, PDEs, elastoplasticity, and also in  game theory, optimization, biological and physiological systems, electrical (chaotic) circuits, networks, power electronics etc (see e.g. \cite{wir,cort}, \cite{mario}, and the references therein).

\noindent Therefore, dynamical systems of fractional-order, modeled with piece-wise continuous functions, gain more and more interest for real systems which follow behaviors modeled better with fractional-order equations than of integer order.

Although there are numerical methods for fractional-order DE (see e.g. \cite{kai,kai2,dr}) and also for DE with discontinuous right-hand side (see e.g. \cite{don,bro,lem,mares}), to the best of our knowledge, there are no numerical methods for DE of fractional-order with discontinuous right-hand side. Consequently, modeling continuously or smoothly the underlying systems could be of a real interest for example in chaos control, synchronization, anticontrol and so on, and also for quantitative analysis.

The class of piece-wise continuous functions $f:\mathbb{R}^n\rightarrow \mathbb{R}^n$ defining these systems, which will be continuously approximated, has the following form:
\begin{equation}\label{f}
f(x(t))=g(x(t))+A(x(t))s(x(t)),
\end{equation}

\noindent with $g:\mathbb{R}^n\rightarrow \mathbb{R}^n$ a single-valued, vector, at least continuous function, and $s:\mathbb{R}^n\rightarrow \mathbb{R}^n$, $s(x)=(s_1(x_1),s_2(x_2),...,s_n(x_n))^T$ a vector-valued piece-wise function, with $s_i:\mathbb{R}\rightarrow \mathbb{R}$, $i=1,2,...,n$, real piece-wise constant functions, $A_{n\times n}$ a square matrix of real functions.

The following assumption will be needed:

\vspace{3mm}
\noindent (\textbf{H1}) The function $As$ is discontinuous in at least one of its components
\vspace{3mm}

The smoothness of the functions $x^msgn(x)$, for different values of $m$ has been discussed in \cite{dan1}.

This form of $f$, appearing in the great majority of nonlinear piece-wise continuous systems of fractional or integer order, is modeled by the following Initial Value Problem (IVP):
\begin{equation}
\label{IVP0}
D_*^q x(t)=f(x(t)),~~
x(0)=x_0,~~~ t\in I=[0,\infty).
\end{equation}

\noindent Here, with $D_*^q$, $0<q\leq1$ ($q=1$ for the integer order), denotes the operator commonly used in fractional calculus: Caputo's differential operator of order $q$ (called also \emph{smooth fractional derivative} with starting point 0)
\begin{equation*}
D_*^qx(t)=\frac{1}{\Gamma(1-q)}\int_0^t (t-\tau)^{-q}\frac {d}{dt}x(\tau)d\tau,
\end{equation*}

\noindent with $\Gamma$ Euler's Gamma function
\begin{equation*}
\Gamma(z)=\int_0^t t^{z-1}e^{-t}dt, ~~~ z\in\mathbb{C}, ~~Re(z)>0.
\end{equation*}

To overcome the problem of numerical integration of systems modeled by (\ref{IVP0}), the discontinuous problem will be transformed into a continuous one. For this purpose, Filippov's approach \cite{fil} will be used along with some basic results from the theory of fractional differential inclusions \cite{aub1,aub2}.

The obtained approximation results, targeting the piece-wise constant functions $s$, are valid for a large class of functions, such as \emph{Heaviside} function $H$, \emph{rectangular} function (as difference of two Heaviside functions), and \emph{signum}, one of the mostly encountered in practical applications.

The null set of the discontinuity points of $f$, $\mathcal{M}$ (with zero Lebesgue measure $\mu$), is generated by the discontinuity points of the components $s_i$.

Because the systems modeled by the IVP (\ref{IVP0}) are autonomous, hereafter, unless otherwise mentioned, we drop the time variable in writing.

Next, consider some examples.

The piece-wise linear one-dimensional function $f:\mathbb{R}\rightarrow \mathbb{R}$
\begin{equation*}
f(x)=2-3sgn(x),
\end{equation*}

\noindent has $\mathcal{M}=\{0\}$ and the graph shown in Fig. \ref{fig0}.

Generally, discontinuous dynamical systems (of integer or fractional-order) can be found in $\mathbb{R}^2$, such as the following fractional variant of the system which models a unit mass which is subject of a discontinuous spring force \cite{cort}:
\begin{equation}
\begin{array}{lc}
D_*^{q_1}{x}_{1}=x_{2}, &  \\
D_*^{q_2}{x}_{2}=-sgn(x_{1}), &
\end{array}%
\end{equation}

\noindent where
\begin{equation*}
g(x)=\left(
\begin{array}{c}
x_{2} \\
0%
\end{array}%
\right) ,~~~A=\left(
\begin{array}{cc}
0 & 0 \\
-1 & 0%
\end{array}%
\right) ,~\ ~s(x)=\left(
\begin{array}{c}
sgn(x_{1}) \\
sgn(x_{2})%
\end{array}%
\right).
\end{equation*}

\noindent In this case, $\mathcal{M}=\{(0,x_2), x_2\in \mathbb{R}\}$.

Another typical example of a mechanical system, which models a friction oscillator \cite{wir} and can model wings of insects \cite{mark}, is governed by:
\begin{equation*}
\overset{..}{x}+\lambda \overset{.}{x}+x^{3}+\varphi (x,\overset{.}{x})sgn(%
\overset{.}{x})=0,
\end{equation*}

\noindent where $\varphi$ is some function, $\lambda$ is the bifurcation parameter. After the system is written in the standard form, we have:
\begin{equation*}
g(x)=\left(
\begin{array}{c}
x_{2} \\
-\lambda x_{1}-x_{1}^{3}%
\end{array}%
\right) ,~~~A=\left(
\begin{array}{cc}
0 & 0 \\
0 & -\varphi \left( x_{1},x_{2}\right)
\end{array}%
\right).
\end{equation*}

The following three-dimensional system is a fractional variant of the discontinuous Chua system \cite{chua}:
\begin{equation}\label{chu}
\begin{array}{cl}
D_*^{q_1}(x)= & -2.571x_{1}+9x_{2}+3.857sgn(x_{1}), \\
D_*^{q_2}(x)= & x_{1}-x_{2}+x_{3}, \\
D_*^{q_3}(x)= & -px_{3},
\end{array}%
\end{equation}

\noindent where $p\in \mathbb{R}$ is the bifurcation parameter (see the graph of the piece-wise continuous component in Fig. \ref{zozo}(a)).

This paper is organized as follows: In Section \ref{switch}, the approximation of $f$ defined by (\ref{f}) is presented, while in Section 3, three applications are analyzed.

\section{Approximation of $f$} \label{switch}

In this section we show how piece-wise continuous functions $f$ modeled by (\ref{f}) can be continuously approximated. Precisely, since $g$ is continuous, we are interested in approximating the piece-wise-constant functions $s_i$.

For this purpose, let us consider the IVP (\ref{IVP0}) whose right-hand side will be first transformed into a set-valued function via the Filippov regularization \cite{fil}. In this way, the single-valued initial problem is reformulated as a set-valued one, namely a differential inclusion of fractional-order of the form
\begin{equation}
D_*^q{x}\in F(x), ~~x(0)=x_0,~~\text{for} ~a.a. ~t\in I,
\end{equation}

\noindent where $F:\mathbb{R}^n \rightrightarrows \mathbb{R}^n$ is a set-valued vector function, mapping into the set of subsets of $\mathbb{R}^n$, which can be defined in several ways (see \cite{jon}, one of a few related works on fractional differential inclusions).

A simple (convex) expression of a set-valued function $F$ is obtained by the so-called Filippov regularization \cite{fil,aub1,aub2}:
\begin{equation}\label{fil}
F(x)=\bigcap_{\varepsilon >0}\bigcap_{\mu(\mathcal{M})=0} \overline{conv}(f({z\in \mathbb{R}^n: |z-x|\leq\varepsilon}\backslash \mathcal{M})).
\end{equation}

As can be seen, $F(x)$ is the convex hull of $f(x)$ (see the sketch in Fig. \ref{fig00}, parts (a), (b)), $\mu$ being the Lebesgue measure and $\varepsilon$ the radius of the ball centered at $x$. At those points where $f$ is continuous, $F(x)$ consists of one single point, which coincides with the value of $f$ at this point (i.e. we get $f(x)$ back as the right-hand side: $F(x)=\{f(x)\}$). At the points belonging to $M$, $F(x)$ is given by (\ref{fil}) (Fig. \ref{fig00}(c)).

More on the Filippov regularization and generalized solutions to discontinuous equations can be found in, e.g. the review papers \cite{cort}, \cite{haj}.

In order to justify the use of the Filippov regularization to some physical systems, we must choose small values for $\varepsilon$, so that the motion of the physical systems is arbitrarily close to a certain solution of the underlying differential inclusion (it tends to the solution, as $\varepsilon\rightarrow 0$).

If the piece-wise-constant functions $s_i$ are $sgn$, their set-valued form, obtained with Filippov regularization and denoted usually by $Sgn:\mathbb{R}\rightrightarrows \mathbb{R}$, is defined as follows (see Fig. \ref{fig1}(a) before regularization and Fig. \ref{fig1}(b) after regularization):
\begin{equation}
Sgn(x)=\left\{
\begin{array}{cc}
\{-1\} & x<0, \\
\lbrack -1,1] & x=0, \\
\{+1\} & x>0.%
\end{array}%
\right.
\end{equation}

By applying the Filippov regularization to $f$, one obtains the following set-valued function
\begin{equation}
\label{IVP1}
F(x):=g(x)+A(x)S(x),
\end{equation}

\noindent with
\begin{equation}\label{s}
S(x)=(S_1(x_1),S_2(x_2),...,S_n(x_n))^T,
\end{equation}

\noindent where $S_i:\mathbb{R}\rightarrow \mathbb{R}$ is the set-valued variant of $s_i$, $i=1,2,...,n$ ($Sgn(x_i)$ in general).

Because the set-valued character of $F$ in (\ref{IVP1}) is generated by $S_i$, which are real functions, the notions and results presented next are considered in $\mathbb{R}$, for the case of $n=1$, but they are also valid in the general cases of $n>1$.

Next, consider some general set-valued function $F:\mathbb{R}\rightrightarrows \mathbb{R}$.

The graph of a set-valued function $F$ is defined as follows:
\begin{equation*}
Graph(F):=\{(x,y)\in \mathbb{R}\times \mathbb{R}, ~y\in F(x)\}.
\end{equation*}

\begin{remark}\label{rem}Due to the symmetric interpretation of a set-valued function as a graph (see e.g. \cite{aub1}), we say that a set-valued function satisfies a property if and only if its graph satisfies it. For instance, a set-valued function is said to be closed if and only if its graph is closed.
\end{remark}

\begin{definition}
As set-valued function $F$ is upper semicontinous (u.s.c.) at $x^0\in \mathbb{R}$, if for any open set $B$ containing $F(x^0)$, there exists a neighborhood $A$ of $x^0$ such that $F(A)\in B$.
\end{definition}

\noindent We say that $F$ is u.s.c. if it is so at every $x^0\in \mathbb{R}$.

\vspace{3mm}
\noindent U.s.c., which is a basic property, practically means that the graph of $F$ is closed.

\begin{definition}
A single-valued function $h:\mathbb{R}\rightarrow \mathbb{R}$ is called an \emph{approximation} (\emph{selection}) of the set-valued function $F$ if
\begin{equation*}
h(x)\in F(x),~~\forall	x\in \mathbb{R}.
\end{equation*}
\end{definition}

\noindent Generally, a set-valued function admits (infinitely) many approximations (see Fig. \ref{fig1}(c) for the case of $Sgn$ function). For the theory of selections for set-valued functions, compare \cite{aub1,aub2} and \cite{kas}.

\smallskip
\noindent \textbf{Notation}

\noindent Let $\mathcal{C}_\varepsilon^0(\mathbb{R})$ be the class of real continuous approximations $\widetilde{s}:\mathbb{R}\rightarrow\mathbb{R}$ of the set-valued function $F$ which satisfy

(i) $Graph(\widetilde{s})\subset Graph (B(F,\varepsilon))$.

(ii) For every $x\in \mathbb{R}$, $\widetilde{s}(x)$ belongs to the convex hull of the image of $F$.

\smallskip

\noindent Above, $B(x,\varepsilon)$ is the disk of radius $\varepsilon$ centered at $x$.

The set-valued functions $S_i$, $i=1,2,...,n$, can be approximated due to the Approximate Theorem, called also Cellina's Theorem (see \cite{aub1} p. 84 and \cite{aub2} p. 358), which states that a set-valued function $F$, with closed graph and convex values, admits $\mathcal{C}_\varepsilon^0$ approximations.

\begin{remark}
Cellina's Theorem provides locally Lipschitzean approximations. Since locally Lipschitzean  functions are also continuous, in this paper we will consider $C_\varepsilon^0$ approximations.
\end{remark}

\subsection{Global approximation}\label{global}

The global approximation of $S_i$, $i=1,2,...,n$, defined over $\mathbb{R}$, is assured by the following lemma.

\begin{lemma}\label{tprinc}
For every $\varepsilon>0$, the set-valued functions $S_i$, $i=1,2,...,n$, admit global $\mathcal{C}_\varepsilon^0$ approximations.
\end{lemma}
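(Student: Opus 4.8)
The plan is to reduce the lemma to a direct application of Cellina's Approximation Theorem, after verifying that each $S_i$ satisfies its two hypotheses. First I would observe that, since $S$ acts componentwise and the claim concerns each $S_i$ separately, it suffices to treat a single set-valued function $S_i:\mathbb{R}\rightrightarrows\mathbb{R}$ obtained by Filippov regularization of a piece-wise constant $s_i$; the paradigm case is $S_i=Sgn$, and the general case is identical since $s_i$ has only finitely many jump points on any bounded interval and is locally constant elsewhere. The two properties to check are: (a) $Graph(S_i)$ is closed (equivalently $S_i$ is u.s.c., by the remark following the definition), and (b) $S_i$ has convex values. For (b), at a point of continuity $S_i(x)=\{s_i(x)\}$ is a singleton, hence convex; at a jump point the Filippov regularization produces the closed interval between the one-sided limits (e.g. $[-1,1]$ for $Sgn$ at $0$), which is convex. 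For (a), away from the discontinuity set $S_i$ coincides with the continuous single-valued $s_i$, so the graph is locally closed there; at each jump point the value is precisely the closed segment spanned by the neighbouring limiting values, which is exactly what makes the graph closed — this is the standard fact that Filippov regularization always yields an u.s.c. map with closed graph and compact convex values.

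Having checked (a) and (b), I would then invoke Cellina's Theorem as stated in the excerpt: a set-valued function with closed graph and convex values admits $\mathcal{C}_\varepsilon^0$ approximations. Applying this to each $S_i$, for every $\varepsilon>0$ there exists a continuous (indeed locally Lipschitz) $\widetilde{s_i}:\mathbb{R}\to\mathbb{R}$ with $Graph(\widetilde{s_i})\subset Graph(B(S_i,\varepsilon))$ and $\widetilde{s_i}(x)$ in the convex hull of the image of $S_i$ for all $x$, i.e. $\widetilde{s_i}\in\mathcal{C}_\varepsilon^0(\mathbb{R})$. Since this holds for each $i=1,2,\dots,n$, the vector $\widetilde{s}=(\widetilde{s_1},\dots,\widetilde{s_n})^T$ is the desired global approximation of $S$, and the lemma follows.

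The one genuine point requiring care — and the step I expect to be the main obstacle — is the word \emph{global}: Cellina's Theorem is usually invoked on a compact domain, whereas here the approximation must be valid on all of $\mathbb{R}$ with a single $\varepsilon$. I would handle this by noting that each $S_i$ is \emph{eventually single-valued and constant}: outside a bounded interval containing all jump points, $S_i$ is a genuine continuous function, so there one simply takes $\widetilde{s_i}=s_i$ with no error at all. On the remaining bounded interval the finite-dimensional, compact-domain version of Cellina's Theorem applies directly, and one then glues the two pieces together continuously in the $\varepsilon$-collars around the endpoints of that interval (possible because the graph-closeness and convexity are preserved and the pieces agree in the overlap region up to the $\varepsilon$-tolerance). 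This patching argument is routine but is the only place where "global" does real work; everything else is a verification of hypotheses followed by a citation.
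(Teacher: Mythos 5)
Your proposal is correct and follows essentially the same route as the paper: verify that each $S_i$ (obtained by Filippov regularization) is u.s.c.\ with non-empty closed convex values and then invoke Cellina's Approximation Theorem to obtain the global $\mathcal{C}_\varepsilon^0$ selections. The only deviation is your final paragraph: the compactness worry and the patching construction are unnecessary, since the version of Cellina's Theorem cited by the paper (Aubin--Cellina, resp.\ Aubin--Frankowska) holds for u.s.c.\ convex-valued maps on an arbitrary metric space, so it applies directly on all of $\mathbb{R}$ with a single $\varepsilon$, exactly as the paper does.
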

\begin{proof} $S_i$, for $i=1,2,...,n$, are convex u.s.c. (see e.g. the Remark in \cite{fil} p. 43 or Example in \cite{aub2} p. 39 for u.s.c. property) and, via Remark \ref{rem}, are non-empty closed valued functions. Therefore, they verify Cellina's Theorem which guaranties the existence of $\mathcal{C}_\varepsilon^0$ approximations on $\mathbb{R}$.
\end{proof}
\vspace{3mm}
\textbf{Notation} Denote by $\widetilde{s}_i:\mathbb{R}\rightarrow \mathbb{R}$ the global approximations of $S_i$.
\vspace{3mm}

For the sake of simplicity hereafter, $\varepsilon$ is considered as having the same value for each component $\widetilde{s}_i(x_i)$, $i=1,2,...,n$. Also, the index $i$ will be dropped, unless specified.

The constructive proof of Cellina's Theorem allows us to ease the approximations choice. Any single-valued function on $\mathbb{R}$, with the graph in the $\varepsilon$-neighborhood, is an approximate selection of $S$ from the Celina Theorem. However, some of the best candidates for $\widetilde{s}$ are the \emph{sigmoid} functions which provide the required flexibility and to which the abruptness of the discontinuity can be easily modified. If $S(x)=Sgn(x)$, one of the mostly utilized sigmoid approximations is the following function $\widetilde{sgn}$\footnote{Sigmoid functions include the ordinary arctangent such as $\frac{2}{\pi}arctan\frac{x}{\varepsilon}$, the hyperbolic tangent, the error function, the logistic function, algebraic functions like $\frac{x}{\sqrt{\epsilon+x^2}}$, and so on.}:
\begin{equation}\label{h_simplu}
\widetilde{sgn}(x)=\frac{2}{1+e^{-\frac{x}{\delta}}}-1\approx Sgn(x),
\end{equation}

\noindent where $\delta$ is a positive parameter which controls the slope in the neighborhood of the discontinuity manifold $x=0$. In Fig. \ref{fig2}(a), $\tilde{sgn}$ is plotted as a function of $\delta$, while in Fig. \ref{fig2} b, it is plotted for two distinct values.

The smallest $\varepsilon$ values, necessarily to embed $\widetilde{sgn}$ within an $\varepsilon$-neighborhood of $Sgn$ (as stated by Cellina's Theorem), depends proportionally on $\delta$. However, finding an explicit relation for $\delta$ as a function of $\varepsilon$ is a difficult task. Moreover, for $x\neq 0$, $\widetilde{sgn}$ is identical to the single-valued branches of $Sgn$ (the horizontal lines $\pm1$) only asymptotically, for $x\rightarrow \pm\infty$. For example, for $\delta=1/100$, at the point $x=0.06$, the difference is of order of $10^{-3}$, even the two graphs look apparently identical in the underlying points $A$ or $B$ (Fig. \ref{fig2}(c)). To reduce the size of $\varepsilon$ to e.g. $10^{-4}$, $\delta$ should be of order of $10^{-5}$. From theoretical point of view, the approximation can be obtained with infinity precision.

For the Heaviside function, which in its piece-wise constant variant can be expressed in terms of the $signum$ function by $H(x)=\frac{1}{2}[1+sgn(x)]$, the approximate sigmoid function (\ref{h_simplu}) becomes $\widetilde{H}(x)=\frac{1}{1+e^{-\frac{x}{\delta}}}$.

Now, we can derive the following result, which assures the possibility to approximate $f$ globally.

\begin{theorem}\label{th1}
Let $f$ be defined by (\ref{f}). If $g$ is continuous, then for every $\varepsilon>0$, there exist global approximations of $f$, $\tilde{f}:\mathbb{R}^n\rightarrow \mathbb{R}^n$, such that
\begin{equation}\label{glo}
\tilde{f}(x)=g(x)+A(x)\widetilde{s}(x)\approx f(x),
\end{equation}

\end{theorem}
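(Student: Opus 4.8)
The plan is to build $\tilde f$ componentwise by using the global approximations $\widetilde s_i$ of the scalar set-valued functions $S_i$ that Lemma~\ref{tprinc} already provides, assemble them into the vector $\widetilde s = (\widetilde s_1,\dots,\widetilde s_n)^T$, and then verify that $g + A\widetilde s$ does the job. The only structural input I need beyond the excerpt is that $g$ is continuous (hypothesis) and that, by Lemma~\ref{tprinc}, for the given $\varepsilon>0$ each $S_i$ admits a continuous $\mathcal C_\varepsilon^0$ approximation $\widetilde s_i:\mathbb R\to\mathbb R$ whose graph lies in an $\varepsilon$-tube around $Graph(S_i)$ and whose values lie in the convex hull of the image of $S_i$.

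\medskip
\noindent\textbf{Step 1 (construct the candidate).} Fix $\varepsilon>0$. Apply Lemma~\ref{tprinc} to obtain $\widetilde s_i\in\mathcal C_\varepsilon^0$ for each $i=1,\dots,n$, set $\widetilde s(x)=(\widetilde s_1(x_1),\dots,\widetilde s_n(x_n))^T$, and define
\begin{equation*}
\tilde f(x):=g(x)+A(x)\widetilde s(x).
\end{equation*}

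\noindent\textbf{Step 2 ($\tilde f$ is single-valued and continuous).} Each $\widetilde s_i$ is continuous on $\mathbb R$, hence $x\mapsto\widetilde s_i(x_i)$ is continuous on $\mathbb R^n$ and so is the vector $\widetilde s$; the entries of $A$ are (by the standing assumptions on the class~\eqref{f}) real functions which we take continuous, $g$ is continuous by hypothesis, so $\tilde f=g+A\widetilde s$ is a genuine single-valued continuous map $\mathbb R^n\to\mathbb R^n$.

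\medskip
\noindent\textbf{Step 3 (it approximates $f$ in the sense of the differential inclusion).} Recall $F(x)=g(x)+A(x)S(x)$ from~\eqref{IVP1} with $S$ as in~\eqref{s}. I would show that $\tilde f$ is an approximate selection of $F$: for each $i$, $\widetilde s_i(x_i)\in B(S_i,\varepsilon)$ pointwise and $\widetilde s_i(x_i)$ lies in the convex hull of the image of $S_i$; multiplying componentwise by the (continuous, hence locally bounded) entries of $A(x)$ and adding $g(x)$ transports this $\varepsilon$-closeness, up to a factor controlled by a local bound on $\|A(x)\|$, into an $\varepsilon'$-closeness of $\tilde f(x)$ to $F(x)$ on each compact set, while the convex-hull membership is preserved because $A(x)$ acts linearly and $g(x)$ is a translation. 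Thus $Graph(\tilde f)$ lies in a neighbourhood of $Graph(F)$ and $\tilde f(x)$ stays in $\overline{conv}\,F(x)$ for all $x$, which is exactly the sense in which~\eqref{glo} asserts $\tilde f\approx f$. Finally, at every continuity point of $f$ we have $F(x)=\{f(x)\}$ and $S_i(x_i)=\{s_i(x_i)\}$, so $\tilde f(x)\to f(x)$ there as $\varepsilon\to0$, confirming the approximation is genuine and recovers $f$ off the null set $\mathcal M$.

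\medskip
The main obstacle I anticipate is not any single deep step but making precise what "$\tilde f\approx f$" means in~\eqref{glo} and pinning down the $\varepsilon$ bookkeeping: the scalar approximations are $\varepsilon$-close to $S_i$, but after premultiplication by $A(x)$ the tube around $Graph(F)$ has a radius that depends on a local bound for $A$, so the statement is cleanest read as "for every $\varepsilon$ there is a continuous $\tilde f$ whose graph lies in a prescribed neighbourhood of $Graph(F)$ and whose values lie in $\overline{conv}\,F$," with the neighbourhood size absorbing $\|A\|$. Everything else—continuity, single-valuedness, convex-hull membership, and pointwise convergence off $\mathcal M$—is a direct consequence of Lemma~\ref{tprinc} together with continuity of $g$ and $A$.
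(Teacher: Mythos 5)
Your proposal is correct and follows essentially the same route as the paper: the paper treats Theorem \ref{th1} as an immediate consequence of Lemma \ref{tprinc}, substituting the global $\mathcal{C}_\varepsilon^0$ approximations $\widetilde{s}_i$ into (\ref{f}) and invoking the continuity of $g$ (and of the entries of $A$) to conclude that $\tilde{f}=g+A\widetilde{s}$ is a continuous global approximation of $f$. Your additional remarks on the $\varepsilon$-bookkeeping after premultiplication by $A(x)$ make explicit a point the paper leaves informal, but they do not change the argument.
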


\noindent Theorem \ref{th1} states that systems modeled by the IVP (\ref{IVP0}) can be continuously modeled by the following IVP:
\begin{equation*}
D_*^q(x)=\tilde{f}(x),
\end{equation*}

\noindent with $\tilde{f}$ defined by (\ref{glo}).

\noindent For example, the function
\begin{equation}\label{fifi}
f(x)=-x^2+sgn(x-0.5)
\end{equation}

\noindent can be globally continuously approximated on $\mathbb{R}$ (dotted line in Fig. \ref{fig3}), having the approximated form
\begin{equation}
\tilde{f}(x)=-x^2+\widetilde{sgn}(x-0.5)=-x^2+\frac{2}{1+e^{-\frac{x-0.5}{\delta}}}-1.
\end{equation}

 \noindent Here, $f$ is transformed first into the set-valued function $F(x)=-x^2+Sgn(x-0.5)$ (red line in Fig. \ref{fig3}), and then approximated via $\widetilde{sgn}$.

\subsection{Local approximation}

In order not to affect significantly the physical characteristics of the underlying system, it is desirable to approximate $S$ only on some tight $\varepsilon$-neighborhoods of the discontinuity $x=0$, not on the entire real axis, as global approximations given by Theorem \ref{tprinc} do, when the difference error between $S$ and $\tilde s$ persists along the entire real axis $\mathbb{R}$. Graphically speaking, we want to restrict the approximation of $S$ only within a narrow vertical band of width $\varepsilon$ centered along the vertical axis. This is allowed by the particular (convex u.s.c.) form of the set-valued functions $S$, and by the great flexibility afforded by continuous functions which can be glued or pasted, without altering the continuity property. In this case, the global approximations (such as the sigmoid functions) are no longer useful (see Fig. \ref{fig2}(b) for the case of $Sgn$), but other kind of selections, such as the local approximations, can be used.

The following corollary, which is a simple consequence of the results of Subsection \ref{global}, ensures the existence of local approximations for the set-valued functions $S$ (see the sketch in Fig. \ref{fig6}).

\begin{corollary}\label{coro2}
For every $\varepsilon>0$, $S$ admits locally $\mathcal{C}_\varepsilon^0$ approximations  $\widetilde{s}_{\varepsilon }:(-\varepsilon,\varepsilon)\rightarrow \mathbb{R}$, which verify the neighborhood continuity conditions
\begin{equation}\label{cond}
\widetilde{s}_{\varepsilon}(\pm\varepsilon)=S(\pm\varepsilon).
\end{equation}
\end{corollary}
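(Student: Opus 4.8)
The plan is to obtain $\widetilde{s}_{\varepsilon}$ in two stages: first produce a continuous approximate selection on the closed neighborhood $[-\varepsilon,\varepsilon]$ of the single discontinuity $x=0$ using the same machinery as in Lemma \ref{tprinc}, and then correct it near the endpoints so that the matching conditions (\ref{cond}) hold (the values at $\pm\varepsilon$ are then understood as the continuous extension to the boundary).

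First, I would consider the restriction of $S$ to $[-\varepsilon,\varepsilon]$. Restricting a convex, u.s.c., closed, non-empty valued set-valued function to a closed subinterval preserves all of these properties, so, exactly as in the proof of Lemma \ref{tprinc}, Cellina's Theorem applies and yields a $\mathcal{C}_\varepsilon^0$ approximation $\widetilde{s}^{\,0}_{\varepsilon}:[-\varepsilon,\varepsilon]\rightarrow\mathbb{R}$, i.e. a continuous map whose graph lies in $Graph(B(S,\varepsilon))$ and whose values lie in the convex hull of the image of $S$ (for $S=Sgn$ this convex hull is $[-1,1]$).

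Second, I would enforce (\ref{cond}). Since $x=0$ is the only discontinuity point of $S$, the endpoints $x=\pm\varepsilon$ are points of continuity, so $S(\pm\varepsilon)$ are singletons and $S$ is constant on each of $[-\varepsilon,0)$ and $(0,\varepsilon]$ (equal to $-1$ and $+1$, respectively, in the $Sgn$ case). Choosing $\eta\in(0,\varepsilon)$ small, I would redefine $\widetilde{s}_{\varepsilon}$ on $[\varepsilon-\eta,\varepsilon]$ to be the convex combination interpolating continuously between $\widetilde{s}^{\,0}_{\varepsilon}(\varepsilon-\eta)$ at $x=\varepsilon-\eta$ and $S(\varepsilon)$ at $x=\varepsilon$, symmetrically near $x=-\varepsilon$, and keep $\widetilde{s}_{\varepsilon}=\widetilde{s}^{\,0}_{\varepsilon}$ on the rest. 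The outcome is continuous and satisfies (\ref{cond}). It is still a $\mathcal{C}_\varepsilon^0$ approximation: condition (ii) survives because the new values are convex combinations of values of $\widetilde{s}^{\,0}_{\varepsilon}$ and of $S$, hence remain in the convex hull of the image of $S$; condition (i) survives because on $[\varepsilon-\eta,\varepsilon]$ the graph of $S$ is the horizontal segment at height $S(\varepsilon)$, so $Graph(B(S,\varepsilon))$ there contains every value within $\varepsilon$ of $S(\varepsilon)$, and both $\widetilde{s}^{\,0}_{\varepsilon}$ and the interpolated values lie in that range once $\eta$ is small.

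The main obstacle is precisely this endpoint-matching step: Cellina's Theorem produces some approximation but gives no control over prescribed boundary values, so one must argue that the local modification neither pushes the graph out of the $\varepsilon$-neighborhood nor leaves the convex hull of the image. Both are guaranteed here by the special structure of $S$ — convex values and local constancy away from the isolated discontinuity $x=0$ — which is exactly the flexibility mentioned before the statement. Finally, I would note that gluing $\widetilde{s}_{\varepsilon}$ on $(-\varepsilon,\varepsilon)$ with the exact single-valued branches of $S$ on $\mathbb{R}\setminus(-\varepsilon,\varepsilon)$ yields a globally continuous selection that coincides with $f$ outside the band, which is the property motivating the corollary; in the vector case the argument is applied componentwise to each $S_i$.
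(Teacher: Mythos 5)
Your argument is correct, but it is organized differently from what the paper actually does. The paper offers no formal proof of Corollary \ref{coro2}: it declares it a ``simple consequence'' of Subsection \ref{global} and then substantiates the claim constructively, by exhibiting explicit local selections that already satisfy (\ref{cond}) --- the cubic interpolants (\ref{cubic}), e.g. $\widetilde{sgn}_{\varepsilon}(x)=-\tfrac{1}{2\varepsilon^{3}}x^{3}+\tfrac{3}{2\varepsilon}x$ in (\ref{loco}), whose graphs stay in the $\varepsilon$-band around $Graph(S)$ and whose values stay in the convex hull because $S$ is single-valued and constant on each side of the isolated discontinuity. You instead rerun the abstract machinery (Cellina's Theorem on the restriction of $S$ to $[-\varepsilon,\varepsilon]$) and then patch the selection near $\pm\varepsilon$ by convex interpolation to force the boundary values. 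That buys generality (no formula needs to be guessed, and the argument would survive if $S$ were not piecewise constant), but it is heavier than necessary here: since $S$ is single-valued near $\pm\varepsilon$, one can skip Cellina entirely and write down the interpolant directly, which is what the paper does and which also yields the smooth matching (\ref{sm}) for free. One sub-step of yours is imprecisely justified: condition (i) for the patched function is not secured by claiming that $\widetilde{s}^{\,0}_{\varepsilon}(\varepsilon-\eta)$ is within $\varepsilon$ of $S(\varepsilon)$ for small $\eta$ --- the $\varepsilon$-neighborhood condition at $x=\varepsilon-\eta$ only places that value within $\varepsilon$ of the whole graph, possibly via the vertical segment $\{0\}\times S(0)$, so it need not be $\varepsilon$-close to $S(\varepsilon)$. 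The conclusion nevertheless stands, because condition (ii) keeps all patched values in the convex hull (hence in $S(0)$ for the single-jump case), so every point $(x,y)$ of the modified graph with $|x|\leq\varepsilon$ is within $|x|\leq\varepsilon$ of $Graph(S)$; stating it this way closes the gap.
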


\noindent In this case, $\tilde{s}_{\varepsilon }$ can also be continuously extended on $\mathbb{R}$, obtaining a new global approximation $\tilde{\tilde{s}}$
\begin{equation}\label{fumu}
\tilde{\tilde{s}}(x)=\left\{
\begin{array}{cc}
\tilde{s}_{\varepsilon}(x), & x\in ( -\varepsilon ,\varepsilon ), \\
S(x), & x\notin (-\varepsilon ,\varepsilon ).%
\end{array}%
\right.
\end{equation}

Among the simplest functions $\tilde{s}_{\varepsilon }$, which have the advantage to be directly evaluated by computers, are the cubic polynomials $\widetilde{s}_{\varepsilon }:\mathbb{R}\rightarrow \mathbb{R}$ (higher order does not always improve accuracy) defined by
\begin{equation}\label{cubic}
\widetilde{s}_{\varepsilon}(x)=ax^3+bx^2+cx+d,~~ a,b,c,d \in \mathbb{R}.
\end{equation}

Other possible candidates are spline functions, which are constructed via piece-wise polynomials.

\noindent The fact that in (\ref{cond}) there are four coefficients to be determined and only two conditions, means that there are an infinity of choices for $\widetilde{s}_{\varepsilon }$. This implies that $\tilde{s}$, given by (\ref{fumu}), can be even smoothly extended on $\mathbb{R}$, by imposing near the gluing conditions (\ref{cond}), the supplementary differentiability conditions at the boundary of the discontinuity neighborhood
\begin{equation}\label{sm}
\frac{d}{dx}\widetilde{s}_{\varepsilon}(\pm \varepsilon)=\frac{d}{dx}S(\pm \varepsilon).
\end{equation}

For the case of $Sgn$ function, when the gluing and smoothing conditions are $\tilde{s}_{\varepsilon}(\pm\varepsilon)=\pm 1$ and $\frac{d}{d x} \tilde{s}_\varepsilon (\pm \varepsilon)=0$ respectively, the local smooth approximate function, denoted by $\widetilde{sgn}_\varepsilon$, becomes
\begin{equation}\label{loco}
\widetilde{sgn}_{\varepsilon}(x)=-\frac{1}{2\varepsilon^3}x^3+\frac{3}{2\varepsilon}x\approx Sgn(x),~~~  x\in (-\varepsilon, \varepsilon ),
\end{equation}

\noindent and using (\ref{fumu}) on $\mathbb{R}$, $Sgn$ is approximated by the following piece-wise function:
\begin{equation}
\widetilde{\widetilde{sgn}}\approx\left\{\label{loc}
\begin{array}{cc}
\widetilde{sgn}_{\varepsilon}(x), & x\in (-\varepsilon ,\varepsilon ),  \\
\pm1, \text{(or~~} sgn(x)\text{)}, & x\notin [-\varepsilon ,\varepsilon].
\end{array}%
\right.
\end{equation}

\begin{remark}
While $\widetilde{s}$ is not useful for locally approximations (see Fig. \ref{fig2}(a)), $\widetilde{s}_{\varepsilon}$ cannot be used for globally approximation of $S$ since it is unbounded outside the interval $(-\varepsilon ,\varepsilon)$ and tends to $\pm\infty$ as $x\rightarrow\pm\infty$ (see Fig. \ref{fig7}(a)).
\end{remark}

The cubic functions (\ref{cubic}) have a great flexibility, being able to connect smoothly any kind of piece-wise continuous functions on some $\varepsilon$-neighborhood of the discontinuity. For example, by using the smoothness conditions (\ref{sm}), the function (\ref{fifi}) can be smoothly approximated in some neighborhood of the point $x=0.5$, with the cubic function (\ref{cubic}) (see Fig. \ref{fig7}(b) where $\varepsilon$ is chosen to be $1/5$ for a clear image).

Compared to the case of globally approximation, the locally approximations, $\widetilde{s}_{\varepsilon}$, which are determined in the neighborhood of the discontinuity points, at which they are generated, are identical with the single-valued branches of $S$, for $|x|\geq\varepsilon$ (see also Fig. \ref{fig66} for the case of function $f(x)=-10x^2+sgn(x)$).

Using $\widetilde{\widetilde{sgn}}$, for example to Chua's system (\ref{chu}), one obtains
\begin{equation*}
\begin{array}{l}
D_{\ast }^{q_{1}}x_{1}=\left\{
\begin{array}{c}
-2.571x_{1}+9x_{2}+3.857\widetilde{sgn}_{\varepsilon }(x_{1}),~x\in \left( -\varepsilon
,\varepsilon \right),  \\
-2.571x_{1}+9x_{2}+3.857sgn(x_{1}),~x\notin \left( -\varepsilon ,\varepsilon %
\right),
\end{array}%
\right.  \\
D_{\ast }^{q_{2}}x_{2}=x_{1}-x_{2}+x_{3}, \\
D_{\ast }^{q_{3}}x_{3}=-px_{3}.%
\end{array}%
\end{equation*}

\noindent The first component on the right-hand side, which is smoothly approximated, has the image as shown in Fig. \ref{zozo}(b) with, again, a large $\varepsilon$.

The following theorem, similar to Theorem \ref{th1}, states the possibility to approximate locally the right-hand side of IVP (\ref{IVP0}).

\begin{theorem}
Let $f$ be defined by (\ref{f}). If $g$ is continuous, then for every $\varepsilon>0$, there exist local approximations of $f$, $\tilde{f}_\varepsilon:\mathbb{R}^n\rightarrow \mathbb{R}^n$, such that
\begin{equation}\label{loc_th}
\tilde{f}_\varepsilon(x)=g(x)+A(x) {\widetilde{s}_\varepsilon}(x)\approx f(x),~~~ x\in(-\varepsilon,\varepsilon).
\end{equation}
\end{theorem}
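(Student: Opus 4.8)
The plan is to mirror the proof of Theorem~\ref{th1}, replacing the global sigmoid selections by the \emph{local} ones furnished by Corollary~\ref{coro2}, and then to check that the error introduced at the level of the piece-wise-constant component $s$ survives the continuous, locally bounded operations $s\mapsto g+A(\cdot)s$ that build $f$. So first I would fix $\varepsilon>0$ and apply Corollary~\ref{coro2} to each scalar set-valued component $S_i$, $i=1,\dots,n$, of $S$ in (\ref{IVP1}). This yields continuous functions $\widetilde{s}_{\varepsilon,i}:(-\varepsilon,\varepsilon)\rightarrow\mathbb{R}$ whose graphs lie in the $\varepsilon$-neighbourhood of $Graph(S_i)$, whose values stay in the convex hull of the image of $S_i$, and which satisfy the matching conditions $\widetilde{s}_{\varepsilon,i}(\pm\varepsilon)=S_i(\pm\varepsilon)$. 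Assembling these componentwise gives the vector $\widetilde{s}_\varepsilon=(\widetilde{s}_{\varepsilon,1},\dots,\widetilde{s}_{\varepsilon,n})^T$ defined on the $\varepsilon$-neighbourhood of the relevant discontinuity, and I set $\tilde{f}_\varepsilon(x):=g(x)+A(x)\widetilde{s}_\varepsilon(x)$, which is exactly (\ref{loc_th}).

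Next I would verify the two properties that make $\tilde f_\varepsilon$ a genuine continuous approximation. Continuity is immediate: $g$ is continuous by hypothesis, the entries of $A$ are (at least) continuous — as in all the examples and as is implicit in the standing form (\ref{f}) — and each $\widetilde{s}_{\varepsilon,i}$ is continuous, so $\tilde f_\varepsilon$ is continuous on the neighbourhood as a finite sum of products of continuous functions. For the approximate-selection property, note that $\widetilde{s}_\varepsilon(x)$ lies in the convex hull of the image of $S(x)$ and $F(x)=g(x)+A(x)S(x)$ with $g(x),A(x)$ fixed, so $\tilde f_\varepsilon(x)$ lies in the convex hull of the image of $F$; and, picking $(\xi,y)\in Graph(S)$ at distance $\le\varepsilon$ from $(x,\widetilde{s}_\varepsilon(x))$, the elementary decomposition
\begin{equation*}
\big|\tilde{f}_\varepsilon(x)-\big(g(\xi)+A(\xi)y\big)\big|\le \omega_g(|x-\xi|)+\|A(x)\|\,|\widetilde{s}_\varepsilon(x)-y|+\|A(x)-A(\xi)\|\,|y|,
\end{equation*}
with $\omega_g$ a local modulus of continuity of $g$ and $\|A(\cdot)\|$ locally bounded, shows that $Graph(\tilde f_\varepsilon)$ lies in the $\eta$-neighbourhood of $Graph(F)$ for some $\eta=\eta(\varepsilon)\to 0$ as $\varepsilon\to 0$. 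Hence $\tilde f_\varepsilon$ belongs to the class $\mathcal{C}^0_\eta$ relative to $F$, which is precisely the meaning of the relation $\tilde f_\varepsilon\approx f$ on $(-\varepsilon,\varepsilon)$. Finally, gluing $\widetilde{s}_\varepsilon$ to $S$ outside $(-\varepsilon,\varepsilon)$ via (\ref{fumu}) extends $\tilde f_\varepsilon$ to a map on $\mathbb{R}^n$ coinciding with a single-valued branch of $f$ away from the discontinuity, as announced after the theorem.

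The step I expect to be the main obstacle is the bookkeeping in the last estimate: one has to be sure that the thin $\varepsilon$-band around $Graph(S)$ in which $\widetilde{s}_\varepsilon$ is confined is transported by $s\mapsto g+A(\cdot)s$ into a genuinely small band around $Graph(f)$. This is what forces the (mild) local-boundedness and continuity requirement on $A$ that is tacit in the hypotheses on (\ref{f}); once that is granted, the whole argument reduces to Corollary~\ref{coro2} exactly as the global statement reduced to Lemma~\ref{tprinc}.
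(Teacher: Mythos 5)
Your proposal is correct and follows essentially the same route as the paper, which states this theorem without a separate proof, treating it as an immediate consequence of Corollary \ref{coro2} (componentwise local $\mathcal{C}_\varepsilon^0$ selections of $S$, glued via (\ref{fumu})) substituted into the decomposition (\ref{f}), exactly as you do. Your only addition is the explicit modulus-of-continuity estimate showing the error at the level of $s$ transfers through $g+A(\cdot)s$, which makes visible the tacit continuity/local-boundedness assumption on the entries of $A$ that the paper uses implicitly; this is a refinement of the same argument, not a different approach.
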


\begin{remark}
If $g$ and $\widetilde{s}_\varepsilon$ or $\tilde{s}$ are smooth functions, then one obtains a smooth approximation of $f$. In this case, we can consider to have approximations of class $C^k_\varepsilon(\mathbb{R})$, with $k>1$, and therefore, the IVP (\ref{IVP0}) can be smoothly modeled.
\end{remark}

Summarizing, as can be seen in Fig. \ref{fig55}, aided by Cellina's Theorem, and depending on $g$ properties (continuity or smoothness), the discontinuous function $f$, given by  (\ref{f}), can be continuously or smoothly approximated, by simply replacing the discontinuous function $s$ with either (\ref{h_simplu}) or (\ref{cubic}).

\section{Numerical tests}

In order to illustrate how this approximation apparatus is utilized, we consider two practical examples of piece-wise continuous systems and one theoretical one-dimensional piece-wise continuous system. To emphasize the rightness of the approximation results, the practical examples are of fractional-order.

The use of Caputo derivative in the IVP (\ref{IVP0}) is fully justified in practical examples since in these problems we need physically interpretable initial conditions, i.e., Caputo derivative satisfies these demands. Even there are some applications discussed in recent years with $q > 1$, the great majority of the physical phenomena are modeled with $0<q<1$. Accordingly, the initial condition can be considered in the standard form \cite{kai}, e.g. for the IVP (\ref{IVP0}), $x(0) = x_0$. Therefore, we consider the case of $q<1$.

Two of the most known methods to solve fractional-order equations are the multi-step predictor-corrector Adams-Bashforth-Moulton method (see e.g. \cite{kai,kai2}) and the Gr\"{u}nwald-Letnikov discretization method (see e.g. \cite{unu,doi}). In this paper, we use the Gr\"{u}nwald-Letnikov discretization method with the integration step-size $h=0.005$.

The Hausdorff distance $d_H$ \cite{fal}, used to underline the results rightness, is of order of $10^{-5}$.

\begin{enumerate}
  \item

The fractional variant of the chaotic attractor of piece-wise-linear Chen's system presented in \cite{aziz}, has the following model:
\begin{equation}
\begin{array}{l}\label{ch}
D_{\ast }^{q_1}x_{1}=1.18\left( x_{2}-x_{1}\right),  \\
D_{\ast }^{q_2}x_{2}=sgn(x_{1})\left(5.82-x_{3}\right) +0.7x_{2}, \\
D_{\ast }^{q_3}x_{3}=x_1sgn(x_{2})-0.168x_{3}.
\end{array}
\end{equation}

With the global approximation (\ref{h_simplu}), the system becomes
\begin{equation*}
\begin{array}{l}
D_{\ast }^{q_1}x_{1}=1.18\left( x_{2}-x_{1}\right),  \\
D_{\ast }^{q_2}x_{2}=(5.82-x_3)\widetilde{sgn}(x_{1})+0.7x_{2}, \\
D_{\ast }^{q_3}x_{3}=x_1\widetilde{sgn}(x_{2})-0.1x_{3},
\end{array}%
\end{equation*}

By applying the local approximation (\ref{loc}), one obtains:
\[
\begin{array}{lc}
D_{\ast }^{q_{1}}x_{1}=1.18\left( x_{2}-x_{1}\right) , &  \\
D_{\ast }^{q_{2}}x_{2}=\left\{
\begin{array}{c}
(5.82-x_{3})\widetilde{sgn}_{\varepsilon }(x_{1})+0.7x_{2,} \\
(5.82-x_{3})sgn_{\varepsilon }(x_{1})+0.7x_{2},%
\end{array}%
\right.  &
\begin{array}{c}
x_{1}\in \left( -\varepsilon ,\varepsilon \right) , \\
x_{1}\notin \left( -\varepsilon ,\varepsilon \right) ,%
\end{array}
\\
D_{\ast }^{q_{3}}x_{3}=\left\{
\begin{array}{c}
x_{1}\widetilde{sgn}_{\varepsilon }(x_{2})-0.168x_{3,} \\
x_{1}sgn_{\varepsilon }(x_{2})-0.168x_{3},%
\end{array}%
\right.  &
\begin{array}{c}
x_{2}\in \left( -\varepsilon ,\varepsilon \right) , \\
x_{2}\notin \left( -\varepsilon ,\varepsilon \right) .%
\end{array}%
\end{array}%
\]

\noindent With $\varepsilon=10^{-5}$, and $\delta=10^{-5}$, one obtain the phase plots in Fig. \ref{fig101}, where both (overplotted) attractors have been generated starting from the same initial conditions. As can be seen, both attractors match very well.

~~~~~In the next example, we study a regular motion.

  \item Let us consider a planar mechanical system, an ''inverted`` Duffing-like system (due to the negativeness of the $\dot{x}$ coefficient) of fractional-order modeled by the following equation (see \cite{dan2} for a general form of integer order):
\begin{equation}\label{duf}
\ddot{x}+0.18\dot{x}-1.5x+0.8x^{3}+6.5sgn(\overset{.}{x}
)=35\cos (0.88t).
\end{equation}

The system evolves along a stable limit cycle. In Fig. \ref{fig12} (a), both attractors, determined with local and global approximations, are plotted superimposed, after transients being neglected. Fig. \ref{fig12} (b) and (c) reveal the fact that, for $\delta=10^{-5}$ and $\varepsilon=10^{-5}$, both approximations are of same order of approximation, the difference being of order of $10^{-15}$.

\item Finally, we consider the following one-dimensional piece-wise system of integer-order, which can be found e.g. in \cite{mares} or \cite{don} and which allows us to calculate, empirically, the approximations errors:
\begin{equation}\label{ex}
\dot{x}(t)=2(h(t)-x(t))+h'(t)+2-2sgn(x(t)), ~~~t\in \left[0,2\right],
\end{equation}

\noindent where
\begin{equation}
h(t)=-\frac{4}{\pi}arctan(t-1),~~~x(0)>0.
\end{equation}

As shown in \cite{mares}, the problem (\ref{ex}) admits a unique solution, given by:
\begin{equation}\label{exact}
x(t)=\left\{
\begin{array}{ll}
h(t), & t\in \lbrack 0,1], \\
0, & t\in (1,2].%
\end{array}%
\right.
\end{equation}

The numerical solutions of this system present the sliding phenomena (oscillations), which appear near the discontinuity (manifold $x=0$), and are in full accordance with the convergence order of numerical methods for discontinuous problems (see e.g. \cite{mares} and \cite{don}). In Fig. \ref{ioi} (a), the exact solution and the numerical solutions corresponding to the global and local approximations are plotted respectively. Fig. \ref{ioi} (b) and (c) show the difference between the two approximate trajectories and the exact solution, which is of order of $10^{-5}$, while in Fig. \ref{ioi} (c) this difference is plotted for $t\in[1,2]$. The qualitative difference between these two plots is due to the mentioned sliding phenomenon. However the error is the same in both intervals, namely, of order $10^{-5}$, which are even better than the errors obtained when the IVP is integrated by methods for DE with discontinuous right-hand side \cite{mares}.

The results for this example have been obtained with the Standard Runge-Kutta method, with $h=10^{-5}$ (better errors can be obtained for smaller $h$, but needs a longer computer time).

\end{enumerate}

\section{Conclusion and discussions}

In this paper, we have proven that piece-wise continuous functions defined by (\ref{f}) can be continuously or smoothly approximated. Accordingly, the underlying systems (\ref{IVP0}), of fractional or integer order, can be modeled by continuous or smooth dynamical systems.

The approximations of the discontinuous components can be made locally or globally. This is possible due to the upper semicontinuity of the convex set-valued functions, obtained with Filippov's regularization applied to the piece-wise constant functions $s$, a property which allows the use of the Approximate Cellina's Theorem.

For global approximations, one of the most accessible functions is the sigmoid function (\ref{h_simplu}), while for local approximations, polynomials seem to be the most appropriate choice. However, the steps to prove the existence of these approximations apply for other continuous approximations.

Even the global (polynomial) approximations, from a theoretical point of view, give better performances than the global approximations (due to the more realistic approximation of the underlying physical phenomenon), some aspects related to numerical implementations, require more further studies. Thus, as shown in the last example in the last section, the errors outlineed that both methods give the same accuracy. Therefore, the choice of any one of these approximations should take into account the physical properties of the considered systems. Further studies are to be made.

Although Cellina's Theorem assures as small as desired approximations errors, in the numerical examples we are limited by several kinds of errors, such as the convergence errors of the utilized numerical methods, errors arising from the finite precision representation
of real numbers on computers etc.

The approximation errors in the case of the discontinuous equation (\ref{ex}), where one knows the exact solution, are consistent with the errors of the numerical schemes for discontinuous systems (see e.g. \cite{mares}).

Since there are no numerical methods for fractional piece-wise continuous systems, the approximation apparatus we provided in this paper could be of a real interest. For example, procedures unattainable to piece-wise continuous systems of fractional-order, like chaos control, synchronization, anticontrol, etc., can be attained in this way. Also, these approximation procedures can be utilized for piece-wise continuous systems of integer order modeled by (\ref{IVP0}), so as to obtain continuous systems.

\vspace{3mm}

\textbf{Acknowledgments} We thank Michal Fe\v{c}kan  and Kai Diethlem for helpful suggestions and discussions.

%

\newpage

\begin{figure*}
\begin{center}
  \includegraphics[clip,width=0.4\textwidth] {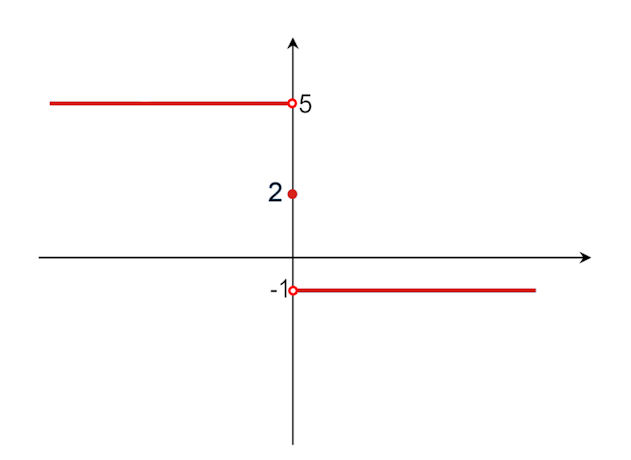}
\caption{Graph of $f(x)=2-3sgn(x)$.}
\label{fig0}
\end{center}
\end{figure*}

\begin{figure*}
\begin{center}
  \includegraphics[clip,width=0.8\textwidth] {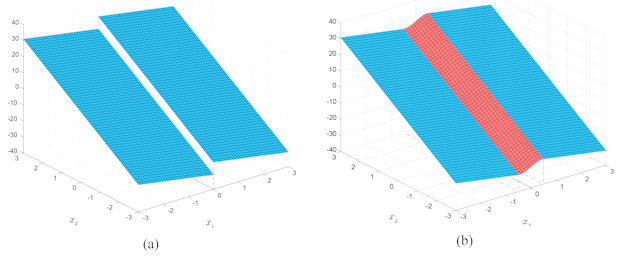}
\caption{Graph of the piece-wise continuous componet of Chua's system (\ref{chu}). (a) Before approximation. (b) After approximation.}
\label{zozo}
\end{center}
\end{figure*}

\begin{figure*}
\begin{center}
  \includegraphics[clip,width=0.6\textwidth] {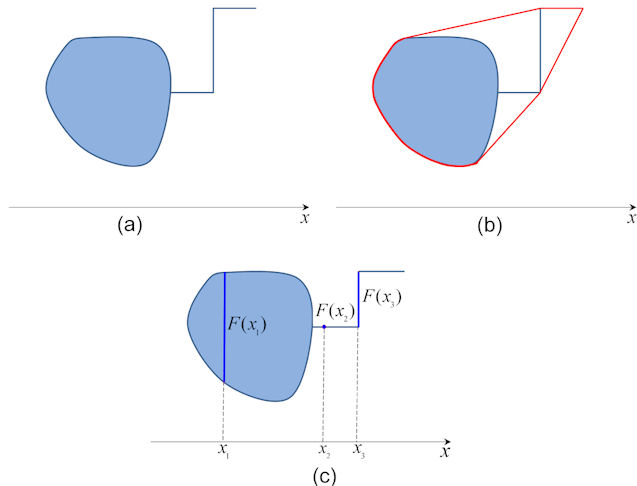}
\caption{(a) Graph of a set-valued function $F$. (b) The closure of the convex hull of $F$. (c) For $x=x_1$ and $x=x_3$, $F(x)$ are segments, while for $x=x_2$, $F(x_2)$ is a point, $f(x_2)$.}
\label{fig00}
\end{center}
\end{figure*}

\begin{figure*}
\begin{center}
  \includegraphics[clip,width=0.5\textwidth] {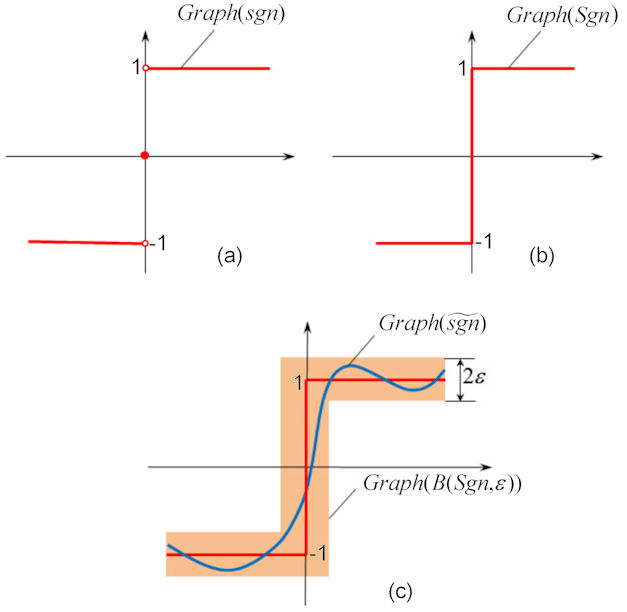}
\caption{(a) Graph of $sgn$. (b) Graph of $Sgn$. (c) Graph of a continuous approximation.}
\label{fig1}
\end{center}
\end{figure*}

\begin{figure*}
\begin{center}
  \includegraphics[clip,width=0.75\textwidth] {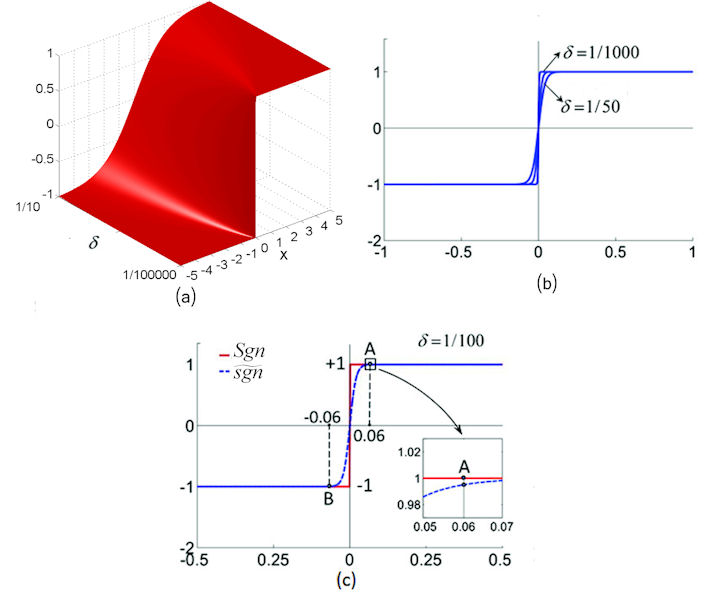}
\caption{(a) Sigmoid function, $\widetilde{sgn}$, for $\delta\in[10^{-5},10^{-1}]$. (b) $\widetilde{sgn}$ for $\delta=1/50$ and $\delta=1/1000$. (c) Graphs of $Sgn$ (red) and of $\widetilde{sgn}$ (blue) for $\delta=1/100$. The detail shows the difference between the two graphs. }
\label{fig2}
\end{center}
\end{figure*}

\begin{figure*}
\begin{center}
  \includegraphics[clip,width=0.5\textwidth] {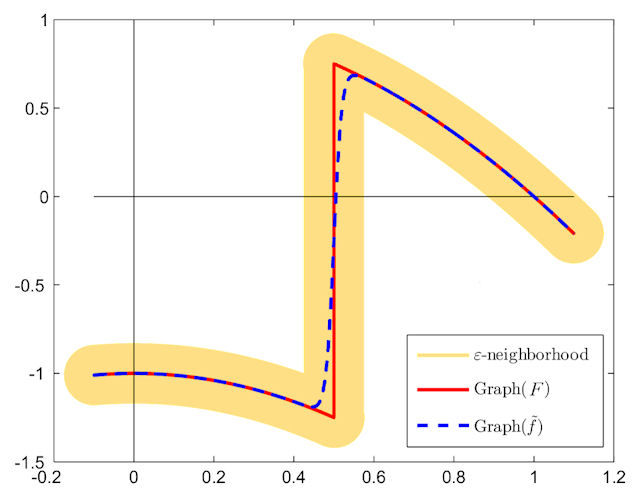}
\caption{Global approximation $\tilde{f}$ (dotted blue color) of the function $f(x)=-x^2+sgn(x-0.5)$. The set-valued form, $F(x)=-x^2+Sgn(x-0.5)$, is plotted in red, and the $\varepsilon$-neighborhood (slowly enlarged) where the approximation, $\tilde{f}$, is embedded, is plotted in yellow.}
\label{fig3}
\end{center}
\end{figure*}

\begin{figure*}
\begin{center}
  \includegraphics[clip,width=0.45\textwidth] {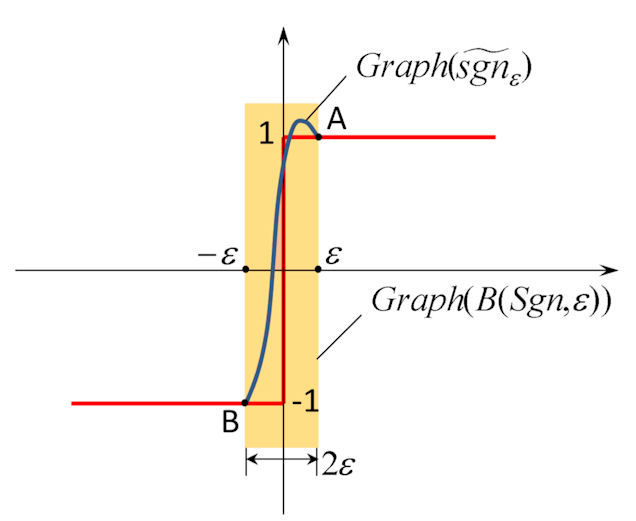}
\caption{Graph of $Sgn$ (red color) and graph of a continuous local approximation $\widetilde{sgn}_\varepsilon$ (blue color), defined inside of some $\varepsilon$-neighborhood of $x=0$. Outside the $\varepsilon$-neighborhood, $\widetilde{sgn}_\varepsilon(x)=Sgn(x).$ }
\label{fig6}
\end{center}
\end{figure*}

\begin{figure*}
\begin{center}
  \includegraphics[clip,width=0.8\textwidth] {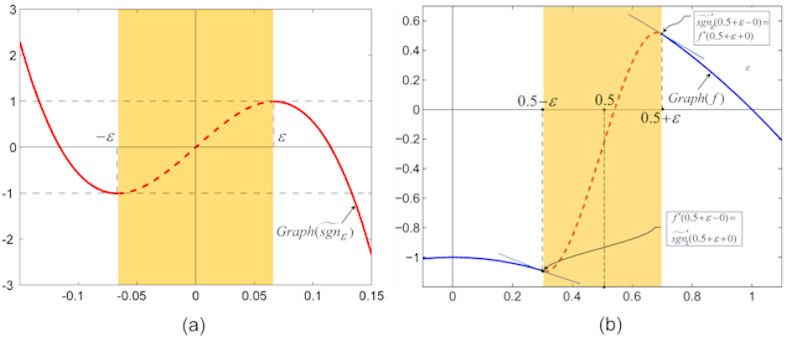}
\caption{(a) Graph of the polynomial local approximation $\widetilde{sgn}_\varepsilon$ of the set-valued $Sgn$ function, defined only within the interval $[-\varepsilon, \varepsilon]$. Outside, $|\widetilde{sgn}_\varepsilon|\rightarrow\infty$. (b) Local approximation of the function $f(x)=-x^2+sgn(x-0.5)$. The approximation is smooth: at the edges of the interval $(-\varepsilon,\varepsilon)$, the supplementary conditions (\ref{sm}) have been imposed (same tangent slope at these points). }
\label{fig7}
\end{center}
\end{figure*}

\begin{figure*}
\begin{center}
  \includegraphics[clip,width=0.5\textwidth] {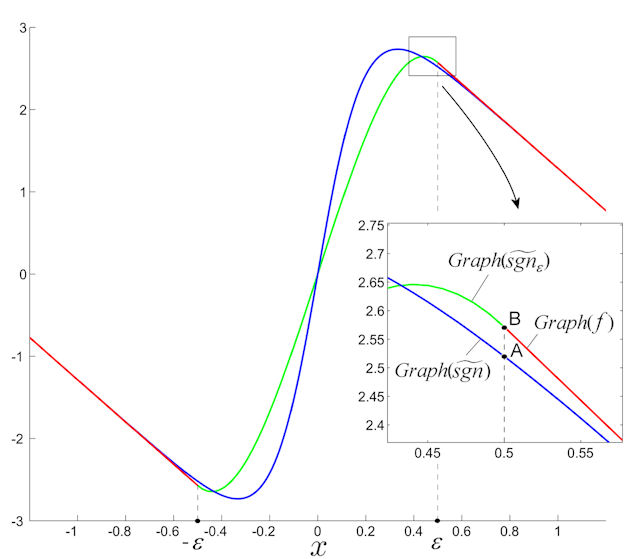}
\caption{Comparison between global and local approximations in the case of piece-wise continuous function $f(x)=-10x^2+sgn(x)$ for a big $\varepsilon=0.5$ value (in order to obtain a clear image). Local approximation $\widetilde{sgn}_\varepsilon$ (green color) is smoothly connected with the graph of $f$ (red color) at the edges point of the $\varepsilon$-interval (point $B$ precisely). The global approximation $\widetilde{sgn}$ is only close to the graph of $f$ (point $A$ precisely).}
\label{fig66}
\end{center}
\end{figure*}

\begin{figure*}
\begin{center}
  \includegraphics[clip,width=0.3\textwidth] {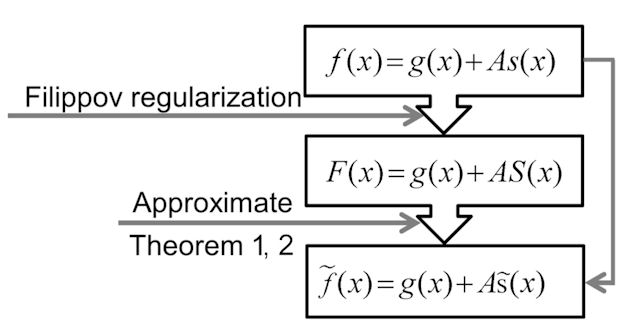}
\caption{Sketch of the proposed approximation procedure. $\tilde{s}$ stands for either local or global approximation.}
\label{fig55}
\end{center}
\end{figure*}

\begin{figure*}
\begin{center}
  \includegraphics[clip,width=0.6\textwidth] {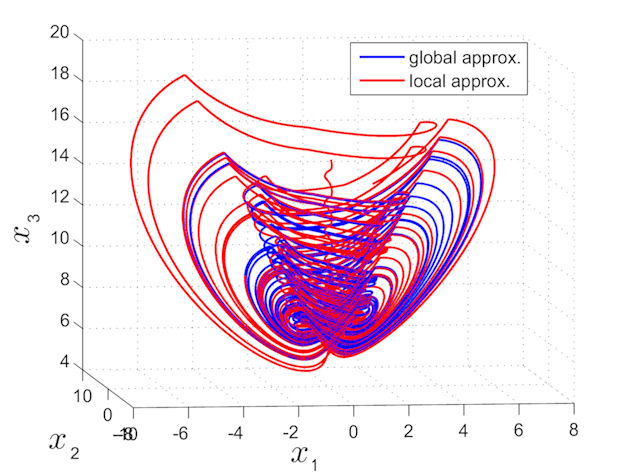}
\caption{Overplotted chaotic attractors of the piece-wise linear Chen system (\ref{ch}), obtained with both approximations.}
\label{fig101}
\end{center}
\end{figure*}

\begin{figure*}
\begin{center}
  \includegraphics[clip,width=0.8\textwidth] {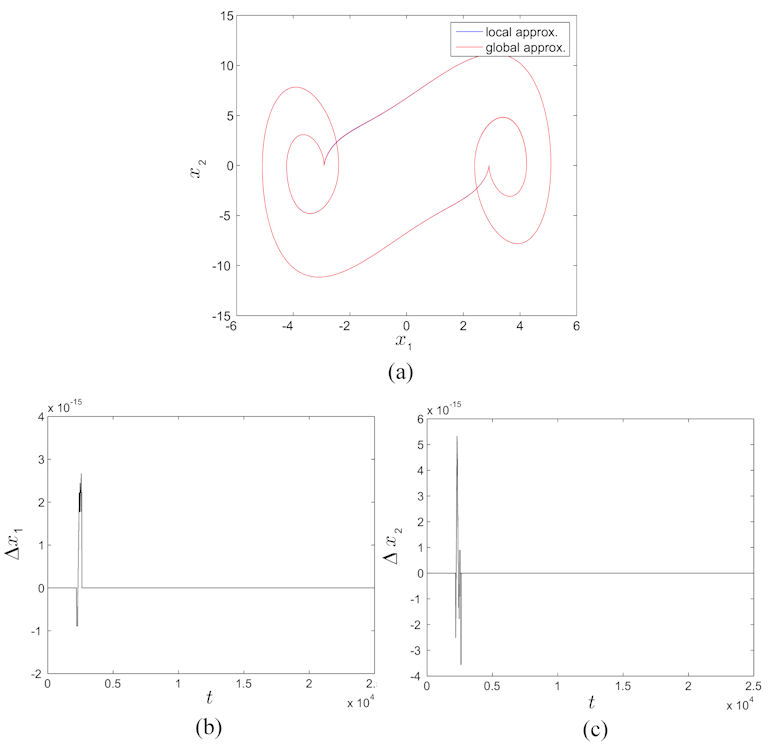}
\caption{(a) Superimposed regular motions of Duffing system (\ref{duf}) corresponding to local (blue color) and global (red color) approximations. (b) Difference between the local and global approximation for the first component $x_1$. (c) Difference between the local and global approximation for the second component $x_2$.}
\label{fig12}
\end{center}
\end{figure*}

\begin{figure*}
\begin{center}
  \includegraphics[clip,width=0.8\textwidth] {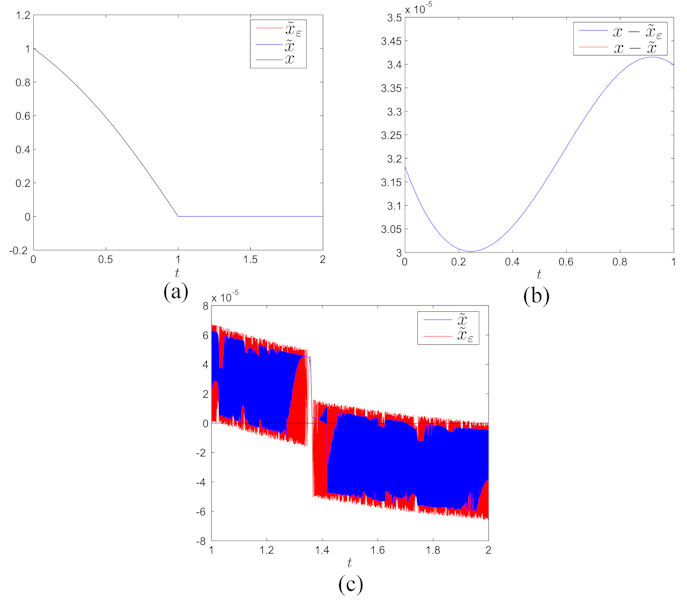}
\caption{(a) Superimposed solutions of the uni-dimensional piece-wise continuous system (\ref{ex}) corresponding to the exact solution (\ref{exact}), $x$ (black), global approximation $\tilde{x}$ (blue) and local approximation $\tilde{x}_\varepsilon$ (red). (b) Difference between the exact solution and the approximate solutions for $t\in[0,1]$. (c) Difference between the exact solution and the approximate solutions for $t\in[1,2]$. The oscillations are typical sliding phenomenon.}
\label{ioi}
\end{center}
\end{figure*}


\begin{thebibliography}{99}






	\bibitem [Aubin \& Frankowska(1990)]{aub2} Aubin, J.-P. \& Frankowska, H. [1990] \emph{Set-Valued Analysis}, (Birkhäuser, Boston).
	
	\bibitem [Lempio \& Veliov(1998)]{lem}Lempio, F. \& Veliov, V.  [1998] ``Discrete approximations of differential inclusions,'' \emph{Bayreuth. Math. Schr. } 54, 149--232.
	
\bibitem[Acary \& Brogliato(2008)]{bro} Acary, V. \& Brogliato, B. [2008] \emph{Numerical Methods for Nonsmooth Dynamical Systems: Applications in Mechanics and Electronics}, (New York, Springer Verlag).

	\bibitem[Aubin \& Cellina(1984)]{aub1} Aubin, J.-P. \& Cellina, A. [1984] \emph{Differential Inclusions: Set-Valued Maps and Viability Theory}, (Springer, Berlin).

	\bibitem[Aziz-Alaoui \& Chen(2002)]{aziz} Aziz-Alaoui, M.A. \& Chen, G. [2002] ``Asymptotic analysisof a new piecewise-linearchaotic system,'' \emph{Int. J. Bifurc. Chaos} 12(1), 147--157.

	\bibitem[Bernardo {\it et al.}(2008)]{mario}Bernardo, M., Budd, C., Champneys, A.R. \& Kowalczyk, P. [2008] \emph{Piecewise-smooth Dynamical Systems}, (Springer, New York).

	\bibitem[Brawn(1993)]{chua}  Brawn R. [1993]: ``Generalizations of the Chua equations,''  \emph{IEEE Trans. Circ. Syst. Fund. Theory Appl. } \textbf{40}(11), 878--83.

	\bibitem[Caponetto {\it et al.}(2010)]{petr} Caponetto, R., Dongola, G., Fortuna, L. \& Petr\'{a}s, I. [2010] \emph{Fractional Oorder Systems; Modeling and Control Applications}, World Scientific Series on Nonlinear Science, Series A - Vol. 72, (World Scientific, Singapore).

	\bibitem[Caputo(2007)]{cap} Caputo, M. [2007] ``Linear models of dissipation whose Q is almost frequency independent-II,'' \emph{Geophys. J. R. Astron. Soc.} 13, 529-539 (1967); reprinted in Fract. Calc. Appl. Anal. 10(3), 309--324.

	\bibitem[Cort\'{e}s(2008)]{cort}Cort\'{e}s, J. [2008] ``Discontinuous dynamical systems: A tutorial on solutions, nonsmooth analysis, and stability,'' \emph{IEEE Control Syst. Mag. }\textbf{28}(3), 36--73.

	\bibitem[Danca \& Lung(2013)]{dan2} Danca, M.-F. \& Lung, N. [2013] ``Parameter switching in a generalized Duffing system: Finding the stable attractors,'' \emph{Appl. Math. Comput. }223, 101--114

	\bibitem[Danca(2007)]{dan1} Danca, M.-F. [2007] ``On a class of non-smooth dynamical systems: a sufficient condition for smooth versus non-smooth solutions,'' \emph{Regular and Chaotic Dynamics}, \textbf{12}(1), 1--11.

	\bibitem[Diethlem {\it et al.}(2002)]{kai}Diethelm, K., Ford, N.J. \& Freed, A.D. [2002] ``Predictor-corrector approach for the numerical solution of fractional differential equations,'' \emph{Nonlinear Dyn.} 29, 3--22.

	\bibitem[Diethlem(2003)]{kai2} Diethelm, K. [2003] ``Efficient solution of multi-term fractional differential equations using P(EC)mE methods,'' \emph{Computing} 71, 305--319.

	\bibitem[Dontchev \& Lempio (1992)]{don}Dontchev, A. \& Lempio, F. [1992] ``Difference methods for differential inclusions: a survey,'' \emph{SIAM Rev } \textbf{34}(2), 263--94.

	\bibitem[Dorcak(2002)]{dr}Dorcak, L. [2002] ``Numerical models for the simulation of the fractional-order control systems,'' arXiv:math/0204108v1 [math.OC]

	\bibitem[Falconer(1990)]{fal} Falconer, K. [1990] \emph{Fractal Geometry: Mathematical Foundations and Applications}, (Wiley, Chichester).

	\bibitem[Filippov(1988)]{fil}Filippov, A.F. [1988] \emph{Differential Equations with Discontinuous Right-Hand Sides}, (Kluwer Academic, Dordrecht).

	\bibitem[H\'{a}jek(1979)]{haj} H\'{a}jek, O. [1979] ``Discontinuous differential equations, I, II,'', \emph{J. Differential Equations} 32 149--185.

	\bibitem[Henderson \& Ouahab(2012)]{jon} Henderson, J. \& Ouahab, O. [2012] ``A Filippov's theorem, some existence results and the compactness of solution sets of impulsive fractional order differential inclusions,'', \emph{Mediterr. J. Math. }\textbf{9}(3), 453--485.

	\bibitem[Hilfer(2008)]{hilf} Hilfer, R. [2008] ``Threefold introduction to fractional derivatives'' In Anomalous Transport: Foundations and Applications, R. Klages et al. (eds.), p. 17. (Wiley-VCH, Weinheim).

	\bibitem[Kaster-Maresch \& Lempio(1993)]{kas} Kaster-Maresch A \& Lempio F. [1993] ``Difference methods with selection strategies for differential inclusions,'', \emph{Numer. Funct. Anal. Optim.} \textbf{14}(5--6), 555--72.

	\bibitem[Kastner-Maresch(1992)]{mares} Kastner-Maresch, A. E. [1992] ``The implicit midpoint Rule Applied to Discontinuous Differential equations,'' \emph{Computing} \textbf{49}(1), 45--62.

	\bibitem[Li \& Zeng(2012)]{unu} Li, C. \& Zeng, F. [2012] ``Finite difference methods for fractional differential equations,'' \emph{Int. J. Bifurc. Chaos } \textbf{22}(4), 1230014.

\bibitem [Kunze(2000)]{mark}Kunze, M. [2000] \emph{Rigorous methods and numerical results for dry friction problems}. In Wiercigroch, M. and de Kraker, B. (eds) Applied Nonlinear Dynamics and chaos of mechanical systems with discontinuities, pp. 207--235, (World Scientific, Singapore).

	\bibitem[Oldham(1974)]{old} Oldham, K.B. \& Spanier, J. [1974] \emph{The Fractional Calculus: Theory and Applications of Differentiation and Integration of Arbitrary Order},(Academic Press, New York).

	\bibitem[Podlubny(2002)]{pod} Podlubny, I. [2002] ``Geometric and physical interpretation of fractional integration and fractional differentiation,'' \emph{J. Fract. Calc. Appl. Anal. }\textbf{5}(4), 367--386
	\bibitem[Scherer {\it et al.}(2011)]{doi} Scherer, R. Kalla, S.L. Tang, Y., Huang, J. [2011] ``The Grünwald-Letnikov method for fractional differential equations,'' \emph{Comput. Math Appl. }\textbf{62}(3), 902--917.
	\bibitem[Wiercigroch \& de Kraker(2000)]{wir} Wiercigroch, M. \& de Kraker, B. [2000] \emph{Applied Nonlinear Dynamics and Chaos of Mechanical Systems with Discontinuities} (World Scientific, Singapore).



\end{thebibliography}
\end{document}